\theoremstyle{plain}
\newtheorem{theorem}{Theorem}[section]
\theoremstyle{definition}
\newtheorem{remark}[theorem]{Remark}
\numberwithin{equation}{section}
\begin{document}

\title[Inequalities Between Size and Charge for Bodies and Existence of Black Holes]
{Inequalities Between Size and Charge for Bodies and the Existence of Black Holes Due to Concentration of Charge}

\author[Khuri]{Marcus A. Khuri}
\address{Department of Mathematics\\
Stony Brook University\\
Stony Brook, NY 11794, USA}
\email{khuri@math.sunysb.edu}

\thanks{The author acknowledges the support of
NSF Grant DMS-1308753.}

\begin{abstract}
A universal inequality that bounds the charge of a body by its size is presented, and is proven as
a consequence of the Einstein equations in the context of initial data sets which satisfy an appropriate energy condition. We also present a general sufficient condition for the formation of black holes due to concentration of charge, and discuss the physical relevance of these results.
\end{abstract}
\maketitle

\section{Introduction}
\label{sec1} \setcounter{equation}{0}
\setcounter{section}{1}

It is well known that black holes of a fixed size can only support a certain amount of charge,
depending on the horizon area\footnote{This refers to apparent horizons. The same statement is not necessarily true for event horizons \cite{MurataReallTanahashi}.} \cite{DainJaramilloReiris,Gibbons,KhuriYamadaWeinstein}. Here we propose
a similar statement for arbitrary charged bodies which do not lie inside a black hole. Namely, let $\Omega$ be a compact spacelike hypersurface in a spacetime which satisfies a
suitable energy condition. If $\Omega$ lies in the domain of outer communication then there exists a universal constant $\mathcal{C}$ such that
\begin{equation}\label{0.1}
\mathrm{Charge}(\Omega)\leq \mathcal{C}\cdot \mathrm{Size}(\Omega),
\end{equation}
where a precise definition of size will be given later. Thus all bodies, from elementary particles to astronomical objects, can only support a fixed amount of charge depending on their size, or rather they must be sufficiently large depending on their charge. Similar results and inequalities have recently been obtained \cite{Dain,Khuri,Reiris} where the role of charge is replaced by angular momentum, that is
\begin{equation}\label{0.2}
\mathrm{AM}(\Omega)\leq\mathcal{C}\cdot \mathrm{Size}(\Omega)
\end{equation}
if $\Omega$ is not inside a black hole. The constant $\mathcal{C}$ in \eqref{0.1} will be a multiple of
$c^{2}/\sqrt{G}$, where $c$ is the speed of light and $G$ is the gravitational constant, and $\mathrm{Size}(\Omega)$ will be measured in units of length.

In \cite{Khuri}, it was shown that if the amount of angular momentum of a body sufficiently exceeds its size, then the body must be contained in a black hole. In this paper, we will also establish such a criterion for black hole existence focusing instead on the role of charge. More precisely, if the opposite inequality of \eqref{0.1} holds, then $\Omega$ must be contained in a black hole. It follows that concentration of charge alone can result in gravitational collapse. This statement is naturally motivated by intuition, since large amounts of charge are associated with strong electromagnetic fields, and
high concentration of matter fields is known to result in black hole formation. This last statement is referred to as the Hoop Conjecture \cite{Thorne}, and is related to the Trapped Surface Conjecture \cite{Seifert}. These conjectures have received much attention, although they are not fully resolved. Many works \cite{BeigOMurchadha,BizonMalecOMurchadha1,BizonMalecOMurchadha2,
Flanagan,Khuri0,Malec1,Malec2,Wald1} involve strong hypotheses including spherical symmetry or maximal slices, and some results \cite{Eardley,SchoenYau2,Yau} are not useful for initial data with low extrinsic curvature.

\section{Explicit Formulation}
\label{sec2} \setcounter{equation}{0}
\setcounter{section}{2}

Let $(M, g, k, E, B)$ be initial data for the Einstein-Maxwell equations which includes a Riemannian 3-manifold $M$ with complete metric $g$, symmetric 2-tensor $k$ denoting
extrinsic curvature, and vector fields $E$ and $B$ representing the electric and magnetic fields. It is assumed that the stress-energy tensor $T^{ab}=T^{ab}_{EM}+T^{ab}_{M}$ is decomposed into two parts, one for the electromagnetic field, $T_{EM}$, and the other for the remaining matter fields, $T_{M}$. If $n^a$ is the timelike unit normal to the slice, then $\mu=T^{ab}n_{a}n_{b}$ and $J^{i}=T^{ia}n_{a}$ represent the energy and momentum densities of all matter fields, whereas $\mu_{M}=T_{M}^{ab}n_{a}n_{b}$ and $J_{M}^{i}=T_{M}^{ia}n_{a}$ give the energy and momentum densities of the non-electromagnetic matter fields. The initial data must satisfy the constraints
\begin{align}\label{1}
\begin{split}
\mu_{M} &= \mu-\frac{1}{8\pi}\left(|E|^{2}+|B|^{2}\right),\\
J_{M} &= J +\frac{1}{4\pi}E\times B,
\end{split}
\end{align}
where
\begin{align}\label{1.0}
\begin{split}
\frac{16\pi G}{c^{4}}\mu &= R+(Tr_{g}k)^{2}-|k|^{2},\\
\frac{8\pi G}{c^{4}} J &= div(k-(Tr_{g}k)g),
\end{split}
\end{align}
with $R$ the scalar curvature of $g$, and $(E\times B)_{i}=\epsilon_{ijl}E^{j}B^{l}$ the cross product; here $\epsilon$ is the volume form of $g$. Recall also that the electric and magnetic fields are obtained from the field strength by $E_{i}=F_{in}$ and $B_{i}=-\frac{1}{2}\epsilon_{ijl}F^{jl}$. The following inequality will be referred to as the charged dominant energy condition
\begin{equation}\label{2}
\mu_{M}\geq|J_{M}|.
\end{equation}

Let $\Omega$ be a body, that is, a connected open subset of $M$ with compact closure and smooth boundary $\partial\Omega$. The sum of the squares of its electric and magnetic charges yields the square of total charge, which (using Gaussian units) is given by
\begin{equation}\label{2.1}
Q^{2}=\left(\frac{1}{4\pi}\int_{\Omega}div E d\omega_{g}\right)^{2}+\left(\frac{1}{4\pi}\int_{\Omega}div B d\omega_{g}\right)^{2}.
\end{equation}

We now describe measurements of the size of the body $\Omega$. In this regard, two definitions of radius will be important. Namely, in \cite{SchoenYau2} Schoen and Yau defined a radius $\mathcal{R}_{SY}(\Omega)$, that may be characterized as the largest (minor) radius among all tori that can be embedded in $\Omega$, and which was central to their condition for the existence of black holes due to compression of matter. In more detail, if $\Gamma$ is a simple closed curve that encloses a disk in $\Omega$, and $r$ is the greatest distance from $\Gamma$ with the property that all points within this distance combine to form an embedded torus in $\Omega$, then $\mathcal{R}_{SY}(\Omega)$ is equal to the maximum $r$ from any such curve $\Gamma$. Another important radius, the \'{O} Murchadha radius $\mathcal{R}_{OM}(\Omega)$, is defined \cite{OMurchadha} as the radius of the largest stable minimal surface that can be embedded in $\Omega$. Here, radius of the surface means the largest distance from a point in the surface to the boundary $\partial\Omega$, as measured by the induced metric on the surface. This is formulated most simply when $\partial\Omega$ is mean convex (has positive mean curvature), so that geometric measure theory guarantees the existence of many smooth least area surfaces contained in $\Omega$. In general, the \'{O} Murchadha radius gives a larger measure of size than the Schoen/Yau radius
\begin{equation}\label{2.2}
\mathcal{R}_{OM}(\Omega)\geq\mathcal{R}_{SY}(\Omega).
\end{equation}

Both radii measure well the size of a ball of radius $a$ in flat space. Namely, for this body $\mathcal{R}_{SY}=a/2$ and $\mathcal{R}_{OM}=a$. However, their measurement for a torus of major radius $a$ and minor radius $b$ is less accurate: $\mathcal{R}_{SY}=b/2$, $\mathcal{R}_{OM}=b$. In particular, for a torus, this measurement of size does not take into account the major radius. This leads to a problem if one tries to establish an inequality of the form \eqref{0.1}, with the notion of size given in terms of either of these radii. For instance, in the weak field limit a torus of large major radius $a$ but small minor radius $b$ could still support a large amount of charge, since its surface area and volume may be large, while the measure of its size in terms of the radii is small. For this reason, we choose a notion of size which incorporates surface area $|\partial\Omega|$ as well. That is, in the precise version of inequality \eqref{0.1}, size is defined by
\begin{equation}\label{2.3}
\mathrm{Size}(\Omega)=\frac{|\partial\Omega|}{\mathcal{R}(\Omega)},
\end{equation}
where $\mathcal{R}(\Omega)$ represents either the Schoen/Yau radius or the \'{O} Murchadha radius. Lastly, it should be mentioned that the radius $\mathcal{R}_{OM}$ gives an accurate measurement for highly dense spherical bodies \cite{Dain, OMurchadha}. Thus even in a strong gravitational field, this measurement is on the order of the area radius. Nevertheless it would be desirable to eventually replace the role of the Schoen/Yau and \'{O} Murchadha radii with a more simple quantity. However our results here depend on theorems proved for these more complicated measurements, and it is not clear how to proceed without them.

With these notions of charge and size, we may formulate a rigorous description of inequality \eqref{0.1}, except for the constant $\mathcal{C}$ which will be given in the next section. In order to
articulate the black hole existence result, we must replace event horizons
with the quasilocal notion of apparent horizons. This is due to the fact that event horizons
cannot be located in initial data without knowledge of the full spacetime development, whereas apparent horizons may be identified directly from the initial data. Let $S\subset M$ be a 2-surface and let
%\begin{equation}\label{10}
$\theta_{\pm}:=H_{S}\pm Tr_{S}k$
%\end{equation}
be the null expansions, which assess potency of the gravitational field;
here $H_{S}$ denotes mean curvature with regard to the unit
outward normal. Geometrically, the null expansions arise from the first variation of area in the outward future ($\theta_{+}$) and outward past ($\theta_{-}$) null directions, and as such they are the rate of change of area for a shell of radiation
discharged by $S$ in these directions. If $\theta_{+}< 0$ ($\theta_{-}< 0$), then $S$ is called a future (past) trapped surface, and the gravitational field is considered to be strong in this vicinity.
The boundaries of future (past) trapped regions are referred to as future (past) apparent horizons, and
solve the equation $\theta_{+}=0$
($\theta_{-}=0$). Cosmic censorship implies that apparent horizons are typically contained inside black holes \cite{Wald}, and thus in many situations they may be used in place of event horizons. We will show that if the opposite inequality of \eqref{0.1} is valid, then an apparent horizon must exist within the initial data.

\section{Inequalities Relating Size and Charge of Bodies}
\label{sec3} \setcounter{equation}{0}
\setcounter{section}{3}

In this section, inequalities of the form \eqref{0.1} will be established, both in the maximal case ($Tr_{g}k=0$) and in the general case. The inequality obtained in the maximal case is stronger, as the universal coefficient $\mathcal{C}$ in this case is smaller. However the inequality obtained for general initial data will be used to obtain the criterion for black hole existence, described in the next section. We begin with an important observation which
will be used in both cases. Let $\Omega$ be a body as described in the previous section, then the total charge may be estimated in terms of the energy and momentum densities as follows
\begin{align}\label{3}
\begin{split}
Q^{2}
=&\left(\frac{1}{4\pi}\int_{\partial\Omega}E\cdot\nu d\sigma_{g}\right)^{2}
+\left(\frac{1}{4\pi}\int_{\partial\Omega}B\cdot\nu d\sigma_{g}\right)^{2}\\
\leq&\frac{|\partial\Omega|}{16\pi^{2}}
\int_{\partial\Omega}\left(|E|^{2}+|B|^{2}\right)d\sigma_{g}\\
=&\frac{|\partial\Omega|}{16\pi^{2}}\int_{\partial\Omega}\left[\left(|E|^{2}+|B|^{2}
-8\pi\mu+8\pi|J_{M}|\right)+8\pi(\mu-|J_{M}|)\right]d\sigma_{g}\\
\leq&\frac{|\partial\Omega|}{2\pi}\int_{\partial\Omega}(\mu-|J_{M}|)d\sigma_{g},
\end{split}
\end{align}
where in the last step the charged dominant energy condition \eqref{2} was used, and
$\nu$ is the unit outer normal to $\partial\Omega$.

\begin{theorem}\label{thm1}
Let $(M,g,k,E,B)$ be a maximal $(Tr_{g}k=0)$ initial data set for the Einstein-Maxwell equations. Then for any body $\Omega\subset M$ with constant energy density $\mu$ and satisfying
the charged dominant energy condition \eqref{2}, the following inequality holds
\begin{equation}\label{4}
|Q|\leq\sqrt{\frac{c^{4}}{12G}}\frac{|\partial\Omega|}{\mathcal{R}(\Omega)},
\end{equation}
where $\mathcal{R}(\Omega)$ denotes the Schoen/Yau radius $\mathcal{R}_{SY}(\Omega)$. Moreover, if in addition the boundary $\partial\Omega$ is mean convex, then $\mathcal{R}(\Omega)$ denotes the \'{O} Murchadha radius $\mathcal{R}_{OM}(\Omega)$.
\end{theorem}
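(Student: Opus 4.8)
The plan is to feed the preliminary estimate \eqref{3} into the maximal Hamiltonian constraint and then convert the resulting pointwise lower bound on scalar curvature into an upper bound for $\mathcal{R}(\Omega)$ via the Schoen--Yau estimate (and, when $\partial\Omega$ is mean convex, its analogue for the \'{O} Murchadha radius). First I would note that in \eqref{3} the momentum term is a liability rather than an asset: since $|J_{M}|\ge 0$ and $\mu$ is \emph{constant} on $\Omega$,
\[
Q^{2}\le\frac{|\partial\Omega|}{2\pi}\int_{\partial\Omega}(\mu-|J_{M}|)\,d\sigma_{g}\le\frac{|\partial\Omega|}{2\pi}\int_{\partial\Omega}\mu\,d\sigma_{g}=\frac{\mu\,|\partial\Omega|^{2}}{2\pi}.
\]
If $\mu\le 0$ this forces $Q=0$ and \eqref{4} is trivial, so from here on one may assume $\mu>0$. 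Constancy of $\mu$ is used twice: to pull $\mu$ out of the boundary integral, and below to obtain a pointwise bound valid on all of $\Omega$.

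Next I would use maximality. The Hamiltonian constraint in \eqref{1.0} becomes $\tfrac{16\pi G}{c^{4}}\mu=R-|k|^{2}\le R$, so $R\ge 2\Lambda$ on $\Omega$ with $\Lambda:=\tfrac{8\pi G}{c^{4}}\mu>0$ a constant. At this point the situation is precisely the geometric setup of Schoen and Yau's black hole existence theorem \cite{SchoenYau2}, run in the opposite direction: a region carrying $R\ge 2\Lambda>0$ cannot contain too fat a torus, and the quantitative version bounds $\mathcal{R}_{SY}(\Omega)$ by a universal multiple of $\Lambda^{-1/2}$; under mean convexity of $\partial\Omega$, the parallel estimate of \cite{OMurchadha,Dain} bounds $\mathcal{R}_{OM}(\Omega)$ in the same way. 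Tracking the explicit constant in that estimate, namely $\mathcal{R}(\Omega)^{2}\le \tfrac{4\pi^{2}}{3\Lambda}=\tfrac{\pi c^{4}}{6G\mu}$, gives $\mu\le\tfrac{\pi c^{4}}{6G\,\mathcal{R}(\Omega)^{2}}$; substituting into the displayed charge bound yields $Q^{2}\le\tfrac{c^{4}}{12G}\,|\partial\Omega|^{2}/\mathcal{R}(\Omega)^{2}$, which is \eqref{4}. If $\Omega$ admits no embedded torus, respectively no stable minimal surface, the relevant radius vanishes and \eqref{4} is vacuous.

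I expect the genuine difficulty to reside entirely inside the cited radius estimates -- at bottom a stability-inequality computation on the relevant minimal surface combined with Gauss--Bonnet, controlling its intrinsic inradius by a multiple of $\Lambda^{-1/2}$ -- so that, granting those estimates, everything above is bookkeeping with constants. The point requiring care is to check their hypotheses in the present setting: the scalar curvature lower bound must hold on the entire solid torus (respectively on the stable minimal surface), which is automatic here since it holds on all of $\Omega$; and for the \'{O} Murchadha version, mean convexity of $\partial\Omega$ is exactly what geometric measure theory needs in order to produce the least area surfaces inside $\Omega$ that define $\mathcal{R}_{OM}$. A secondary, routine point is to dispose first of the degenerate cases ($\mu\le 0$, or $Q=0$, or $\Omega$ containing no admissible surface).
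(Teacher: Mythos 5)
Your argument is essentially the paper's proof: drop the $|J_M|$ term in \eqref{3}, use constancy of $\mu$ to get $Q^{2}\leq\mu|\partial\Omega|^{2}/2\pi$, and then invoke the Schoen--Yau radius estimate under maximality (equivalently $R\geq 16\pi G\mu/c^{4}$) to bound $\mu\leq\pi c^{4}/(6G\mathcal{R}(\Omega)^{2})$, with the mean-convex case handled by the corresponding estimate for the \'{O} Murchadha radius. The only cosmetic difference is that the paper attributes the mean-convex/$\mathcal{R}_{OM}$ version of the density bound to Galloway--\'{O} Murchadha \cite{GallowayOMurchadha}, and your explicit treatment of the degenerate cases ($\mu\leq 0$, vanishing radius) is a harmless addition.
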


\begin{proof}
It suffices to estimate the integral on the right-hand side of \eqref{3}. We have
\begin{equation}\label{4.1}
\int_{\partial\Omega}(\mu-|J_{M}|)d\sigma_{g}
\leq\int_{\partial\Omega}\mu d\sigma_{g}=\mu|\partial\Omega|.
\end{equation}
In light of the maximal assumption and the constancy of $\mu$, Theorem 1 of \cite{SchoenYau2} may be applied to yield
\begin{equation}\label{5}
\mu\leq\frac{\pi c^{4}}{6G\mathcal{R}_{SY}(\Omega)^{2}}.
\end{equation}
It follows from \eqref{3} that
\begin{equation}\label{6}
Q^{2}\leq\frac{c^{4}}{12G}\frac{|\partial\Omega|^{2}}{\mathcal{R}_{SY}(\Omega)^{2}}.
\end{equation}

Now consider the case when the boundary $\partial\Omega$ is mean convex. It was pointed out
in \cite{GallowayOMurchadha}, that under this additional hypothesis, the estimate \eqref{5}
holds with the \'{O} Murchadha radius
\begin{equation}\label{5.1}
\mu\leq\frac{\pi c^{4}}{6G\mathcal{R}_{OM}(\Omega)^{2}}.
\end{equation}
It follows that \eqref{4} holds with the \'{O} Murchadha radius. Note that \eqref{2.2} implies that this is a better
result than the estimate with the Schoen/Yau radius.
\end{proof}

We will now obtain an inequality between the size and charge of bodies without the maximal assumption. Here we will employ a technique developed by Schoen and Yau in \cite{SchoenYau1, SchoenYau2}, which reduces certain problems for general initial data back to the maximal setting. The idea is that in the maximal setting, nonnegative scalar curvature $R\geq 0$ is guaranteed from the dominant energy condition, and it is this nonnegativity which is fundamental for establishing many geometric inequalities, such as the positive mass theorem or \eqref{5} in the proof of Theorem \ref{thm1}. Thus, it is natural to deform the initial data metric $g$ to a new unphysical metric $\overline{g}$ whose scalar curvature satisfies $\overline{R}\geq 0$, at least in a weak sense. This is accomplished in \cite{SchoenYau1} by setting $\overline{g}_{ij}=g_{ij}+\nabla_{i}f\nabla_{j}f$, which is the induced metric on the graph $t=f(x)$ in the 4-dimensional product manifold $\mathbb{R}\times M$, where $f$ satisfies the so called Jang equation
\begin{equation}\label{11}
\left(g^{ij}-\frac{f^{i}f^{j}}{1+|\nabla f|^{2}}\right)\left(\frac{\nabla_{ij}f}
{\sqrt{1+|\nabla f|^{2}}}-k_{ij}\right)=0,
\end{equation}
with $f^{i}=g^{ij}\nabla_{j}f$. The purpose of this equation is to guarantee that the scalar curvature of $\overline{g}$ is weakly nonnegative, in fact it is given by the following formula \cite{BrayKhuri1,BrayKhuri2,SchoenYau1}
\begin{equation}\label{12}
\overline{R}=\frac{16\pi G}{c^{4}}(\mu-J(v))+
|h-k|_{\overline{g}}^{2}+2|q|_{\overline{g}}^{2}
-2div_{\overline{g}}(q).
\end{equation}
Here $div_{\overline{g}}$
is the divergence operator, $h$ is the second fundamental form of the graph, and
\begin{equation}\label{13}
v_{i}=\frac{f_{i}}{\sqrt{1+|\nabla f|^{2}}},\text{
}\text{ }\text{ }\text{ }
q_{i}=\frac{f^{j}}{\sqrt{1+|\nabla f|^{2}}}(h_{ij}-k_{ij}).
\end{equation}
If the dominant energy condition $\mu\geq|J|$ is valid, then each term on the right-hand side of \eqref{12} is clearly nonnegative, except perhaps the divergence term; hence we
may view $\overline{R}$ as being weakly nonnegative, which is sufficient for most purposes.

Let us assume that the Jang equation \eqref{11} possesses a smooth solution in $\Omega$. Measurement of the accumulation of matter fields, which is needed to estimate the right-hand side of \eqref{3}, may be obtained by assessing the concentration of scalar curvature for the unphysical metric $\overline{g}$. This in turn may be accomplished by estimating
the principal Dirichlet eigenvalue
\begin{equation}\label{14}
\lambda_{1}=\frac{\int_{\Omega}\left(|\overline{\nabla}\phi|^{2}
+\frac{1}{2}\overline{R}\phi^{2}\right)d\omega_{\overline{g}}}
{\int_{\Omega}\phi^{2}d\omega_{\overline{g}}}
\end{equation}
of the operator $\Delta_{\overline{g}}-\frac{1}{2}\overline{R}$,
where $\phi$ is the principal eigenfunction.
From the weak nonnegativity of the scalar curvature, we may integrate by parts and use the two nonnegative terms $|\overline{\nabla}\phi|^{2}$ and $|q|_{\overline{g}}^{2}\phi^{2}$ to find
\begin{equation}\label{16}
\lambda_{1}\geq\frac{8\pi G}{c^{4}}\min_{\Omega}(\mu-|J|)=:\Lambda.
\end{equation}
Here we have also used the fact that $|v|\leq 1$, so that $\mu-J(v)\geq\mu-|J|$.

The estimate \eqref{16} for the principal eigenvalue allows an application of Proposition 1 of \cite{SchoenYau2} (if $\Lambda\neq 0$), from which we obtain
\begin{equation}\label{17}
\overline{\mathcal{R}}_{SY}(\Omega)\leq \sqrt{\frac{3}{2}}\frac{\pi}{\sqrt{\Lambda}},
\end{equation}
where the radius $\overline{\mathcal{R}}_{SY}$ pertains to $\overline{g}$. Since the metric $\overline{g}$ measures lengths larger than does the metric $g$, it follows that $\overline{\mathcal{R}}_{SY}
\geq\mathcal{R}_{SY}$. Furthermore, let $\psi\in C^{\infty}(\Omega)$ be an arbitrary positive function, then dividing and multiplying $\Lambda^{-1}$ by
$\int_{\Omega}\psi d\omega_{g}\left(\int_{\Omega}(\mu-|J|)\psi d\omega_{g}
\right)^{-1}$ produces
\begin{equation}\label{18}
\Lambda^{-1}\leq\frac{c^{4}\mathcal{C}_{0}}{8\pi G}\frac{\int_{\Omega}\psi d\omega_{g}}
{\int_{\Omega}(\mu-|J|)\psi d\omega_{g}},
\end{equation}
where
%\begin{equation}\label{19}
$\mathcal{C}_{0}=
\frac{\max_{\Omega}\left(\mu-|J|\right)}
{\min_{\Omega}\left(\mu-|J|\right)}$
%\end{equation}
when $\mu-|J|>0$ in $\Omega$, and $\mathcal{C}_{0}=\infty$ when $\mu-|J|=0$ somewhere in $\Omega$. Therefore
\begin{equation}\label{20}
\int_{\Omega}(\mu-|J|)\psi d\omega_{g}\leq
\frac{3\pi c^{4}\mathcal{C}_{0}}{16G}\frac{\int_{\Omega}\psi d\omega_{g}}
{\mathcal{R}_{SY}(\Omega)^{2}},
\end{equation}
and this leads to a general correlation between the size and charge of bodies. Before stating this result, we record integral forms of the charged dominant energy condition and an `enhanced' energy condition on the boundary
\begin{equation}\label{20.1}
\int_{\partial\Omega}\left(\mu_{M}-|J_{M}|\right)d\sigma_{g}\geq 0,\text{ }\text{ }\text{ }\text{ }\text{ }\text{ }\text{ }\text{ }\text{ }\text{ }\int_{\partial\Omega}\left(\mu_{M}-|J|\right)d\sigma_{g}\geq 0,
\end{equation}
and define a constant $\mathcal{C}_{1}^{2}=\mathcal{C}_{0}\frac{\max_{\partial\Omega}(\mu-|\tilde{J}|)}
{\min_{\Omega}(\mu-|J|)}$ where $\tilde{J}$ is $J_{M}$ or $J$ depending on whether the first or second condition of \eqref{20.1} is satisfied, respectively.

\begin{theorem}\label{thm2}
Let $(M,g,k,E,B)$ be an initial data set for the Einstein-Maxwell equations,
which contains no compact apparent horizons. Assume that either $M$ is asymptotically flat, or has a strongly untrapped boundary, that is $H_{\partial M}>|Tr_{\partial M}k|$. Then for any body $\Omega\subset M$ satisfying the dominant energy condition $\mu\geq|J|$, and one of the two energy conditions \eqref{20.1} on $\partial\Omega$, the following inequality holds
\begin{equation}\label{21}
|Q|\leq\mathcal{C}_{1}\sqrt{\frac{3c^{4}}{32G}}\frac{|\partial\Omega|}{\mathcal{R}_{SY}(\Omega)}.
\end{equation}
%where $\mathcal{C}_{1}^{2}=\mathcal{C}_{0}\frac{\max_{\partial\Omega}(\mu-|J_{EM}|)}
%{\min_{\Omega}(\mu-|J|)}$ or %$\mathcal{C}_{1}^{2}=\mathcal{C}_{0}\frac{\max_{\partial\Omega}(\mu-|J|)}
%{\min_{\Omega}(\mu-|J|)}$ when the first or second condition of \eqref{20.1} is used, %respectively.
%Moreover the same result is valid, if instead of the enhanced dominant energy condition, %the charged dominant energy condition \eqref{2} holds on $\Omega$ and $8\pi E\times B\cdot %J_{EM}\geq|E\times B|^{2}$ holds on $\partial\Omega$.
\end{theorem}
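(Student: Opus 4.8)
\medskip
\noindent\textit{Proof strategy.} The plan is to run the argument of Theorem~\ref{thm1} on the Jang graph metric $\overline{g}$ instead of on $g$, so that the maximal assumption can be removed, and then to convert the resulting bound on a volume quantity into the boundary integral appearing in \eqref{3}. First dispose of the degenerate case: if $\mu-|J|$ vanishes somewhere on $\overline{\Omega}$ then $\mathcal{C}_{0}=\infty$, hence $\mathcal{C}_{1}=\infty$ and \eqref{21} holds trivially; so assume $\mu-|J|>0$ on $\overline{\Omega}$, which makes $\Lambda>0$ and $\mathcal{C}_{0}$ finite.

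The essential ingredient is a smooth global solution of the Jang equation \eqref{11}. This is exactly where the hypotheses of the theorem are consumed: the absence of compact apparent horizons removes the marginally trapped cylinders along which solutions of \eqref{11} blow up, while asymptotic flatness, or the strongly untrapped boundary condition $H_{\partial M}>|Tr_{\partial M}k|$, supplies the regime in which the Schoen--Yau existence and blow-up analysis of \cite{SchoenYau1,SchoenYau2} (together with the refinements behind \eqref{12}, cf.\ \cite{BrayKhuri1,BrayKhuri2}) produces a smooth $f$ on $M$. Restricting to the compact set $\overline{\Omega}$ gives the smooth metric $\overline{g}_{ij}=g_{ij}+\nabla_{i}f\nabla_{j}f$ with the scalar curvature identity \eqref{12}; since $\overline{g}\ge g$ as quadratic forms, lengths do not decrease and $\overline{\mathcal{R}}_{SY}(\Omega)\ge\mathcal{R}_{SY}(\Omega)$. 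With this in hand, the chain \eqref{14}--\eqref{20} goes through as written: the Dirichlet eigenvalue bound \eqref{16}, Proposition~1 of \cite{SchoenYau2} yielding \eqref{17}, and its reformulation \eqref{20}, valid for every positive $\psi\in C^{\infty}(\Omega)$.

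Now combine. In \eqref{20} choose a family $\psi_{\varepsilon}\in C^{\infty}(\Omega)$, strictly positive, concentrating within an $\varepsilon$-collar of $\partial\Omega$ and normalized so that $\psi_{\varepsilon}\,d\omega_{g}$ converges to the induced surface measure $d\sigma_{g}$ on $\partial\Omega$; letting $\varepsilon\to 0$ in \eqref{20} gives
\[
\int_{\partial\Omega}(\mu-|J|)\,d\sigma_{g}\ \le\ \frac{3\pi c^{4}\mathcal{C}_{0}}{16G}\,\frac{|\partial\Omega|}{\mathcal{R}_{SY}(\Omega)^{2}}.
\]
On the other side, since $|E|^{2}+|B|^{2}=8\pi(\mu-\mu_{M})$ by \eqref{1}, the second line of \eqref{3} reads $Q^{2}\le\frac{|\partial\Omega|}{2\pi}\int_{\partial\Omega}(\mu-\mu_{M})\,d\sigma_{g}$, and either integral condition in \eqref{20.1} bounds this by $\frac{|\partial\Omega|}{2\pi}\int_{\partial\Omega}(\mu-|\tilde{J}|)\,d\sigma_{g}$ with $\tilde{J}=J_{M}$ or $J$ accordingly. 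If $\tilde{J}=J$, insert the displayed bound to get $Q^{2}\le\frac{3c^{4}\mathcal{C}_{0}}{32G}\frac{|\partial\Omega|^{2}}{\mathcal{R}_{SY}^{2}}\le\mathcal{C}_{1}^{2}\frac{3c^{4}}{32G}\frac{|\partial\Omega|^{2}}{\mathcal{R}_{SY}^{2}}$, using $\mathcal{C}_{0}\le\mathcal{C}_{1}^{2}$. If $\tilde{J}=J_{M}$, first apply on $\partial\Omega$ the pointwise comparison $\mu-|J_{M}|\le\frac{\max_{\partial\Omega}(\mu-|J_{M}|)}{\min_{\Omega}(\mu-|J|)}(\mu-|J|)$, integrate over $\partial\Omega$, and then insert the displayed bound to obtain $Q^{2}\le\mathcal{C}_{0}\frac{\max_{\partial\Omega}(\mu-|J_{M}|)}{\min_{\Omega}(\mu-|J|)}\cdot\frac{3c^{4}}{32G}\frac{|\partial\Omega|^{2}}{\mathcal{R}_{SY}^{2}}=\mathcal{C}_{1}^{2}\frac{3c^{4}}{32G}\frac{|\partial\Omega|^{2}}{\mathcal{R}_{SY}^{2}}$. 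Taking square roots gives \eqref{21}.

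The main obstacle is the Jang-equation step: everything downstream is bookkeeping, but producing a smooth global solution of \eqref{11} requires the full blow-up analysis, and it is precisely the no-horizon and asymptotic/untrapped-boundary hypotheses that rule out blow-up; one should also check that $H_{\partial M}>|Tr_{\partial M}k|$ is compatible with, and does not itself force, blow-up at the boundary. A secondary technical point is to keep $\psi_{\varepsilon}$ smooth and strictly positive throughout the collar degeneration, so that the $\mathcal{C}_{0}$-form \eqref{20} remains applicable in the limit, and to verify the claimed weak convergence of $\psi_{\varepsilon}\,d\omega_{g}$ to surface measure.
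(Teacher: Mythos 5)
Your proposal is correct and follows essentially the same route as the paper's proof: solvability of the Dirichlet problem for the Jang equation with smoothness forced by the no-apparent-horizon hypothesis, the eigenvalue bound \eqref{16}, Proposition 1 of Schoen--Yau giving \eqref{17}--\eqref{20}, the comparison $\overline{\mathcal{R}}_{SY}\geq\mathcal{R}_{SY}$, and the boundary estimates \eqref{22}, \eqref{22.1} obtained from \eqref{3} together with \eqref{20.1}. The only deviation is cosmetic: where the paper converts \eqref{20} into a boundary estimate by simply taking $\psi$ to be the constant in \eqref{23} (or \eqref{23.1}) and bounding it by the max/min ratio, you instead pass to a collar-concentration limit $\psi_{\varepsilon}\,d\omega_{g}\to d\sigma_{g}$ and then use the same pointwise max/min comparison (which, as in the paper, implicitly uses $\max_{\partial\Omega}(\mu-|J_{M}|)\geq 0$, guaranteed by \eqref{20.1} and \eqref{1}); both versions are valid and yield the same constant $\mathcal{C}_{1}$.
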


\begin{proof}The assumptions concerning the asymptotics of $M$ or its boundary imply that a strongly untrapped 2-surface is present in the initial data, and hence the Dirichlet problem \cite{SchoenYau2} for the Jang equation admits a solution with $f=0$ on such a surface. In addition, $f$ must be a smooth solution due to the lack of apparent horizons. The arguments above now apply, so that \eqref{20} is valid.

Let us now assume that the first energy condition of \eqref{20.1} holds. Then
\eqref{3} is valid and we find that
\begin{equation}\label{22}
Q^{2}
\leq\frac{|\partial\Omega|}{2\pi}\int_{\partial\Omega}(\mu-|J_{M}|)d\sigma_{g}.
\end{equation}
Furthermore by choosing
\begin{equation}\label{23}
\psi=\frac{\int_{\partial\Omega}(\mu-|J_{M}|)d\sigma_{g}}
{\int_{\Omega}(\mu-|J|)d\omega_{g}}\leq\frac{\max_{\partial\Omega}(\mu-|J_{M}|)}
{\min_{\Omega}(\mu-|J|)}\frac{|\partial\Omega|}
{|\Omega|},
\end{equation}
we obtain
\begin{align}\label{24}
\begin{split}
\int_{\partial\Omega}(\mu-|J_{M}|)d\sigma_{g}\leq &
\frac{3\pi c^{4}\mathcal{C}_{1}^{2}}{16G}\frac{|\partial\Omega|}{\mathcal{R}_{SY}(\Omega)^{2}}
\end{split}
\end{align}
from \eqref{20}. Together, \eqref{22} and \eqref{24} yield the stated conclusion.

Alternatively, if the second energy condition of \eqref{20.1} holds, then add and subtract $|J|$ instead of $|J_{M}|$ in \eqref{3} to obtain
\begin{equation}\label{22.1}
Q^{2}
\leq\frac{|\partial\Omega|}{2\pi}\int_{\partial\Omega}(\mu-|J|)d\sigma_{g}.
\end{equation}
By choosing
\begin{equation}\label{23.1}
\psi=\frac{\int_{\partial\Omega}(\mu-|J|)d\sigma_{g}}
{\int_{\Omega}(\mu-|J|)d\omega_{g}}\leq\frac{\max_{\partial\Omega}(\mu-|J|)}
{\min_{\Omega}(\mu-|J|)}\frac{|\partial\Omega|}
{|\Omega|},
\end{equation}
a similar argument yields \eqref{21}.
\end{proof}

\begin{remark}
The first energy condition in \eqref{20.1} is satisfactory, as it is a weaker version of the well known energy condition \eqref{2} (used for instance in the positive mass theorem with charge) which states that the non-electromagnetic matter fields satisfy the dominant energy condition. On the other hand, the second energy condition in \eqref{20.1} is not well motivated, and we view it as a technical assumption that could perhaps be removed with further investigation. It should also be pointed out that the constant in \eqref{21} is independent of the particular matter model, as long as it satisfies the appropriate energy condition.
%The secondary condition involving $E\times B$ may be rephrased as $-2P\cdot %J_{EM}\geq|P|^{2}$,
%where $P$ is the momentum density of the Maxwell field. Thus, this condition will be
%satisfied if the Maxwell field has no linear momentum on $\partial\Omega$, or more
%generally if the component of the momentum density for nonelectromagnetic matter fields, in %the direction of $-P$, is greater than or equal to $\frac{1}{2}|P|$ at $\partial\Omega$.
\end{remark}

%Let us now assume that some condition has been imposed to guarantee that $|J_{EM}|\geq|J|$ %at $\partial\Omega$. For instance, this is satisfied if $E=0$ or $B=0$, or more generally %whenever $E\times B=0$. Then
%\begin{equation}\label{11}
%\int_{\partial\Omega}(\mu-|J_{EM}|)\leq\int_{\partial\Omega}(\mu-|J|),
%\end{equation}
%and by combining \eqref{3}, \eqref{10}, and \eqref{11} yields
%\begin{equation}\label{12}
%|q|\leq\mathcal{C}_{0}\sqrt{\frac{3c^{4}}{32G}}\frac{|\partial\Omega|}{\mathcal{R}_{SY}(\Omega)}.
%\end{equation}

This result generalizes Theorem \ref{thm1}, which involves strong hypotheses such as the assumption of maximal data and constant matter density. Note, however, that the inequality of Theorem \ref{thm2} is weaker than that of Theorem \ref{thm1}, since the constant $\mathcal{C}_{1}\sqrt{\frac{3c^{4}}{32G}}$ appearing in \eqref{21} is generally larger than the constant $\sqrt{\frac{c^{4}}{12G}}$ of \eqref{4}. Thus, although two undesirable hypotheses have been removed, the resulting inequality is not optimal;
the problem of finding the optimal constant is currently being investigated by Dain et al. \cite{AngladaDainOrtiz} in the context of spherical symmetry. We also mention that the charged bodies constructed by Bonnor in \cite{Bonnor}, all of which satisfy \eqref{21}, could potentially be useful in this pursuit.
It turns out that the difference in the constants just described is related to a black hole existence result which we now explain.

\section{Criteria for Black Hole Formation}
\label{sec4} \setcounter{equation}{0}
\setcounter{section}{4}

The dependence of Theorem \ref{thm2} on solutions of the Jang equation \eqref{11}
inherently produces a black hole existence result. This is due to the fact
that solutions are regular except possibly at apparent horizons, where the
graph $t=f(x)$ tends to blow-up in the form of a cylinder (see \cite{HanKhuri}, \cite{SchoenYau1}). In
other words, if it can be shown that the Jang equation does not possess a regular solution, then an apparent horizon must be present in the initial data.
This method for producing black holes was initially
used by Schoen and Yau in \cite{SchoenYau2}. Here we will use it to obtain a criterion
for black hole existence due to concentration of charge.

\begin{theorem}\label{thm3}
Let $(M,g,k,E,B)$ be an initial data set for the Einstein-Maxwell equations,
such that either $M$ is asymptotically flat, or has a strongly untrapped boundary, that is $H_{\partial M}>|Tr_{\partial M}k|$. If $\Omega\subset M$ is a body satisfying the dominant energy condition $\mu\geq|J|$ and one of the two energy conditions \eqref{20.1}, with
\begin{equation}\label{25}
|Q|>\mathcal{C}_{1}\sqrt{\frac{3c^{4}}{32G}}\frac{|\partial\Omega|}{\mathcal{R}_{SY}(\Omega)},
\end{equation}
then $M$ contains an apparent horizon of spherical topology which encloses a region that intersects $\Omega$.
\end{theorem}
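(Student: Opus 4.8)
The plan is to obtain Theorem~\ref{thm3} as a sharpened contrapositive of Theorem~\ref{thm2}, using the blow-up behaviour of the Jang equation both to locate the horizon relative to $\Omega$ and to pin down its topology. I would argue by contradiction: suppose $M$ contains no apparent horizon of spherical topology enclosing a region that meets $\overline{\Omega}$. As in the proof of Theorem~\ref{thm2}, the asymptotic or boundary hypothesis supplies a strongly untrapped $2$-surface on which to impose $f=0$, so the Dirichlet problem for the Jang equation \eqref{11} admits a generalized solution in the sense of Schoen--Yau \cite{SchoenYau2}. By the blow-up analysis of \cite{SchoenYau1, SchoenYau2} (see also \cite{HanKhuri}), $f$ is smooth except possibly along apparent horizons, where the graph $t=f(x)$ develops cylindrical ends: $f\to+\infty$ at the outermost future apparent horizon ($\theta_{+}=0$) and $f\to-\infty$ at the outermost past apparent horizon ($\theta_{-}=0$).

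Next I would identify the topology of any such blow-up surface. Being outermost, it is a stable marginally (outer or inner) trapped surface, so by the Galloway--Schoen theorem on the topology of such surfaces \cite{GallowaySchoen}, together with Galloway's removal of the exceptional toroidal case \cite{Galloway}, each of its components is a topological $2$-sphere; this is where the dominant energy condition $\mu\geq|J|$ is used, exactly as it enters through the scalar-curvature formula \eqref{12} in Section~\ref{sec3}. Under the contradiction hypothesis, every such spherical apparent horizon then encloses a region disjoint from $\overline{\Omega}$, so $\overline{\Omega}$ lies in the exterior region on which the Dirichlet solution $f$ is smooth. In particular $\overline{g}_{ij}=g_{ij}+\nabla_{i}f\nabla_{j}f$ is a smooth Riemannian metric on a neighbourhood of $\overline{\Omega}$ whose scalar curvature is weakly nonnegative via \eqref{12}.

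With a smooth Jang solution on $\overline{\Omega}$ in hand, I would then rerun the estimates of Section~\ref{sec3} verbatim: the principal Dirichlet eigenvalue bound \eqref{16}, the radius estimate \eqref{17} from Proposition~1 of \cite{SchoenYau2}, the comparison $\overline{\mathcal{R}}_{SY}(\Omega)\geq\mathcal{R}_{SY}(\Omega)$, and the resulting inequality \eqref{20}. Combining \eqref{20} with the charge estimate \eqref{3} under whichever of the two energy conditions in \eqref{20.1} is assumed (taking $\psi$ as in \eqref{23} or \eqref{23.1}) produces \eqref{21}, in direct contradiction with \eqref{25}. Hence $M$ must contain an apparent horizon of spherical topology enclosing a region that intersects $\Omega$.

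The hard part will be the localization step: one must be certain that the generalized solution of the Dirichlet problem fails to be smooth \emph{only} at apparent horizons (no interior blow-up of some other type, correct asymptotics of the cylindrical ends, and simultaneous control of future and past horizons), and that disjointness of $\overline{\Omega}$ from every enclosed trapped region genuinely forces $f$ to be smooth on a neighbourhood of $\overline{\Omega}$; this is precisely the point at which the full Schoen--Yau / Han--Khuri blow-up theory must be invoked with care. A secondary technical point is that the topology theorem applies to the specific surfaces produced by the blow-up, namely outermost marginally trapped surfaces, so before appealing to \cite{GallowaySchoen, Galloway} one should pass to the outermost apparent horizon enclosing any given trapped region and verify stability there.
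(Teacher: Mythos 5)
Your proposal is correct and follows essentially the same route as the paper: solve the Dirichlet problem for the Jang equation using the asymptotic or untrapped-boundary hypothesis, observe that a solution regular over $\Omega$ would reproduce the estimate \eqref{21} from the proof of Theorem \ref{thm2} and contradict \eqref{25}, and conclude from the blow-up analysis that an outermost apparent horizon of spherical topology must enclose a region meeting $\Omega$. The only cosmetic difference is that you derive the spherical topology of the outermost blow-up surface from the Galloway--Schoen stable-MOTS theorem (references not in this paper's bibliography), whereas the paper cites the Schoen--Yau blow-up analysis \cite{SchoenYau1} directly for that fact.
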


\begin{proof}
As in the proof of Theorem \ref{thm2}, the assumptions on the boundary of $M$ or its asymptotics imply the existence of a solution to the Dirichlet boundary
value problem for the Jang equation, with $f=0$ on $\partial M$ or on an appropriate coordinate sphere in the asymptotic end. If the solution were regular, then by Theorem \ref{thm2} the opposite inequality of \eqref{25} would hold. Since this is not the case,
we conclude that the solution is not regular, and hence the existence of an
apparent horizon is guaranteed. Moreover, among apparent horizons arising from the blow-up of Jang's
equation, there is at least one (which is outermost) with spherical topology \cite{SchoenYau1}. If the region which is enclosed by this apparent horizon does not intersect $\Omega$, then the solution of Jang's equation is smooth over $\Omega$. This would then imply by the proof of Theorem \ref{thm2} that \eqref{21} holds, yielding a contradiction.
\end{proof}

%We remark that Theorems \ref{thm1}, \ref{thm2}, and \ref{thm3} do not depend
%on the type of nonelectromagnetic matter model present in the spacetime.
Whether concentrated charge leading to gravitational collapse is a naturally occurring phenomenon, seems to be an intriguing open question. In the next section we will comment on some physical aspects of this problem.
As for the theoretical part, it should be pointed out that the proposed criteria is not
satisfied in the maximal case, since as mentioned previously, the universal constant in inequality \eqref{4} is smaller than
the constant in \eqref{21} and \eqref{25}. A similar relation between the maximal and nonmaximal cases holds with regards to
Schoen and Yau's condition for black hole creation. Therefore, sufficient amounts of extrinsic curvature are needed for the Schoen/Yau condition, as well as the hypotheses of Theorem \ref{thm3}, to be fulfilled.

With this intuition, we now show how to construct examples of initial data satisfying the conditions for black hole existence. Fix an asymptotically flat metric $g$ on $M\simeq\mathbb{R}^{3}$, and set $\Omega=B_{1}(0)$ to be the unit ball. Let $(e_{1},e_{2},e_{3})$ be an orthonormal frame and set $k_{ij}:=k(e_{i},e_{j})=0$ for all $(i,j)\neq(1,1), (2,2)$, and $k_{11}=k_{22}=\beta^2$ for some parameter $\beta>0$. It follows that
\begin{equation}\label{25.1}
\mu =\frac{c^{4}}{16\pi G}(R+2k_{11}k_{22})\sim\beta^{4},\text{ }\text{ }\text{ }\text{ }\text{ }\text{ }|J|\sim\beta^2.
\end{equation}
For large $\beta$ this yields $\mathcal{C}_{1}\sim 1$, and also $\mu_{M}>|J|$ if we choose $E,B\sim\beta$. It may be arranged so that $div E=div B=4\pi\beta$ in $\Omega$, which implies that $|Q|=\alpha\sqrt{2}|\Omega|$. Since the right-hand side of \eqref{25} is independent of $\beta$, it follows that \eqref{25} is satisfied for large $\beta$. Lastly, by extending $k$, $E$, and $B$ outside of $\Omega$ to be asymptotically flat, all the hypotheses of Theorem \ref{thm3} are satisfied and an apparent horizon must be present. Since the constant $2\beta^{2}$ represents $Tr_{g}k$, we conclude that large traces of extrinsic curvature facilitate the formation of apparent horizons in this setting. Note that there is quite a bit of freedom in this construction, since for instance the metric $g$ is essentially arbitrary. Of course, this is an abstract and ad hoc procedure which is not of physical interest, but it shows that there are many configurations satisfying the criteria of this result. It should also be pointed out that besides the Maxwell field, the matter models present in this construction are not given explicitly. However, it seems that a combination of dust (or perfect fluid) with a charged scalar field might fit. Finally we mention that since $\mu-|J|\sim\beta^4$, and the radius $\mathcal{R}_{SY}(\Omega)$ is fixed, this data also satisfies the hypotheses of the Schoen/Yau black hole existence criterion \cite{SchoenYau2}. Moreover a similar construction in axisymmetry, with the role of charge replaced by angular momentum, yields examples of data satisfying the criteria of \cite{Khuri}.

%A natural question to ask is whether condition \eqref{25} is satisfied for black holes. Note, %however, that Theorem \ref{thm3} only applies rigorously to bodies and this typically excludes %black holes. Nevertheless, even in such cases inequality \eqref{25} is heuristically satisfied
%and is consistent with the area-charge inequality for black holes \cite{DainJaramilloReiris}.
%For example, consider initial data consisting of the canonical slice of the %Reissner-Nordstr\"{o}m black hole. This initial data has two asymptotically flat ends separated %this setting, that is, the set of all points lying to one side of the horizon. According to the %area-charge inequality $|\partial\Omega|\geq \frac{4\pi G}{c^{4}} Q^2$, and so from \eqref{25} we %have $\mathcal{R}_{SY}(\Omega)\gtrsim |Q|$. Moreover it is also clear that %$\mathcal{R}_{SY}(\Omega)=\infty$, showing that this inequality is trivially satisfied.

\section{Physical Relevance}
\label{sec5} \setcounter{equation}{0}
\setcounter{section}{5}

The inequalities between size and charge for bodies, as well as the black hole existence criterion proven above, are predictions of Einstein's theory and hence should be contrasted with observational evidence and other theories. Let us consider bodies which are approximately spherical in shape, so that the ratio of boundary area to radius is on the order of the radius
$\mathcal{R}$. Then in general terms, what we have shown is that for stable bodies
\begin{equation}\label{26}
|Q|\lesssim \frac{c^{2}}{\sqrt{k_{e}G}}\mathcal{R},
\end{equation}
and that if the opposite inequality holds then the body should undergo gravitational collapse. Here $\lesssim$ should be interpreted in terms of order of magnitude, and $k_{e}\approx9\times10^{9} Nm^{2}C^{-2}$ is Coulomb's constant so that \eqref{26} is expressed in SI units, as opposed to Gaussian units used in previous sections.

Consider now an electron. It has a classical radius of $\mathcal{R}_e\approx 2.8\times 10^{-15}m$. Moreover, since $G\approx 6.67\times 10^{-11}Nm^{2}kg^{-2}$ and $c\approx 3\times 10^{8}ms^{-1}$ it follows that
\begin{equation}\label{27}
\frac{c^{2}}{\sqrt{k_{e}G}}\mathcal{R}_e
\approx 100 C.
\end{equation}
Therefore, since the charge of an electron $|Q_e|\approx 1.6\times 10^{-19}C$, we find that \eqref{26} is satisfied.

According to the principle of charge quantization, the charge of a body is an integer multiple of the elementary charge (charge of an electron). Thus, $|Q_e|$ is the smallest amount of charge that a body can possess. Using this fact in \eqref{26}, we find that the classical theory imposes the following minimum size for a body
\begin{equation}\label{27.1}
\mathcal{R}_{0}= \frac{\sqrt{k_{e}G}}{c^{2}}|Q_e|\approx 1.4\times 10^{-36}m,
\end{equation}
which is on the order of the Planck length $l_{p}=\left(\frac{G\hbar}{c^{3}}\right)^{1/2}\approx 1.6\times 10^{-35}m$. It then appears to be a remarkable self consistency of
the Einstein field equations that they predict a minimum
length on the order of magnitude of the Planck length, if
we assume the principle of charge quantization.

On the other hand, we may consider bodies of astronomical scale such as stars. The study of the effects of electric charge in isolated gravitating systems goes back to Rosseland \cite{Rosseland} and Eddington \cite{Eddington}. It was shown that since electrons are rather less massive than protons, electrons tend to escape more frequently, as part of the solar wind. This induces a net
positive charge in the star, which then yields an attractive force on electrons trying to escape.
Eventually an equilibrium of these forces is established, resulting in a net positive charge
on the order of $\sim100 (M/M_{\odot})C$ \cite{HarrisonBally}, where $M$ is the mass of the star.
Thus, for typical stars, net charge is sufficiently small to be considered insignificant, and they certainly satisfy inequality \eqref{26}. However, as pointed out by Witten \cite{Witten}, it is theoretically possible to have stars made of absolutely stable strange quark matter. These are highly dense bodies, which have masses and radii similar to those
of neutron stars. They are also capable of possessing large amounts of charge \cite{NWMU}, and thus are candidates to violate \eqref{26}. Consider such a star with charge $|Q|=10^{20} C$ and radius $\mathcal{R}=10^{4}m$ as considered in \cite{NWMU}.  We have
\begin{equation}\label{28}
\frac{c^{2}}{\sqrt{k_{e}G}}\mathcal{R}
\approx 10^{21} C,
\end{equation}
so that \eqref{26} is still satisfied, although it is nearly violated. Moreover, the black hole existence criterion associated with \eqref{26} asserts that a star of this radius can
only support a charge of $|Q|\sim 10^{20}C$, beyond which the system will collapse to
form a black hole; this is consistent with the findings of \cite{REMLZ}, obtained numerically
with different methods. Lastly, we mention that magnetic charge is also included on the left hand side of \eqref{26}, and thus it would be interesting to contrast the above results with empirical evidence associated with magnetic charge.


\begin{thebibliography}{99}


%\bibitem{Bray} H. Bray, \textit{Proof of the Riemannian Penrose inequality %using the positive mass
%theorem}, J. Differential Geom., \textbf{59} (2001), 177-267.

\bibitem{BeigOMurchadha} R. Beig, and N. \'{O} Murchadha, \emph{Trapped surfaces due to
concentration of gravitational radiation}, Phys. Rev. Lett., \textbf{66} (1991), 2421.

\bibitem{BizonMalecOMurchadha1} P. Bizon, E. Malec, and N. \'{O} Murchadha, \emph{Trapped surfaces
in spherical stars}, Phys. Rev. Lett., \textbf{61} (1988), 1147-1450.

\bibitem{BizonMalecOMurchadha2} P. Bizon, E. Malec, and N. \'{O} Murchadha, \emph{Trapped surfaces
due to concentration of matter in spherically symmetric geometries}, Classical Quantum Gravity, \textbf{6} (1989), 961-976.

\bibitem{Bonnor} W. Bonnor, \emph{A model of a spheroidal body}, Classical Quantum Gravity, \textbf{15} (1998), 351-356.

\bibitem{BrayKhuri1} H. Bray, and M. Khuri, \emph{A Jang equation approach to the Penrose inequality},
Discrete Contin. Dyn. Syst., \textbf{27} (2010), no. 2, 741–766. arXiv:0910.4785

\bibitem{BrayKhuri2} H. Bray, and M. Khuri, \emph{P.D.E.'s which imply the Penrose conjecture}.
Asian J. Math., \textbf{15} (2011), no. 4, 557-610. arXiv:0905.2622

%\bibitem{Brill} D. Brill, \textit{On the positive definite mass of the Bondi-Weber-Wheeler
%time-symmetric gravitational waves}, Ann. Phys., \textbf{7} (1959), 466–483.

%\bibitem{ChaKhuri} Y. Cha, and M. Khuri, \textit{Deformations of axially symmetric initial data and
%the mass-angular momentum inequality}, Ann. Henri Poincar\'{e}, to appear (2014), DOI %10.1007/s00023-014-0332-6. arXiv:1401.3384.

%\bibitem{Choquet-Bruhat} Y. Choquet-Bruhat, \textit{General Relativity and the Einstein Equations}, %Oxford University Press, 2009.

%\bibitem{Christodoulou} D. Christodoulou, \textit{Reversible and irreversible
%transformations in black hole physics}, Phys. Rev. Lett., \textbf{25} (1970), 1596-1597.

%\bibitem{Chrusciel0} P. Chru\'{s}ciel, \textit{On completeness of orbits of Killing vector fields}, %Classical Quantum Gravity, \textbf{10} (1993), no. 10, 2091-2101. arXiv:gr-qc/9304029

%\bibitem{Chrusciel} P. Chru\'{s}ciel, \textit{Mass and angular-momentum inequalities for axi-symmetric %initial data sets. I. Positivity of Mass}, Ann. Phys., \textbf{323} (2008), 2566-2590. %arXiv:0710.3680

%\bibitem{ChruscielCosta0} P. Chru\'{s}ciel, and J. Costa, \textit{On uniqueness of station-
%ary vacuum black holes,} In Proceedings of G\'{e}om\'{e}trie diffrentielle,
%Physique math\'{e}matique, Mathematiques et soci\'{e}t\'{e}, Ast\'{e}risque,
%\textbf{321} (2008), 195-265. arXiv:0806.0016

%\bibitem{ChruscielCosta} P. Chru\'{s}ciel, and J. Costa, \textit{Mass, angular-momentum and charge %inequalities for axisymmetric initial data}, Classical Quantum Gravity, \textbf{26} (2009), no. 23, %235013.  arXiv:0909.5625

%\bibitem{ChruscielGallowayPollack} P. Chru\'{s}ciel, G. Galloway, and D. Pollack, \textit{Mathematical %General Relativity: a sampler},
%Bull. Amer. Math. Soc. (N.S.), \textbf{47} (2010), no. 4, 567-638. arXiv:1004.1016

%\bibitem{ChruscielLiWeinstein} P. Chru\'{s}ciel, Y. Li, and G. Weinstein, \textit{Mass and %angular-momentum inequalities for axi-symmetric initial data sets.
%II. Angular Momentum}, Ann. Phys., \textbf{323} (2008), 2591-2613. arXiv:0712.4064

%\bibitem{CRT} P. Chrusciel, H. Reall, and P. Tod, \textit{On
%Israel-Wilson-Perjes black holes}, Classical Quantum Gravity, \textbf{23} %(2006), 2519-2540.

%\bibitem{ClementJaramilloReiris} M. Clement, J. Jaramillo, and M. Reiris, \textit{Proof of the %area-angular momentum-charge inequality for axisymmetric black holes}, Class. Quantum Grav., %\textbf{30} (2012), 065017. arXiv:1207.6761

%\bibitem{Costa} J. Costa, \textit{Proof of a Dain inequality with charge}, J. Phys. A,
%\textbf{43} (2010), no. 28, 285202. arXiv:0912.0838

%\bibitem{Dain0} S. Dain, \textit{Proof of the angular momentum-mass inequality for axisymmetric black %hole}, J. Differential Geom., \textbf{79} (2008), 33-67. arXiv:gr-qc/0606105

%\bibitem{Dain0} S. Dain, \emph{Geometric inequalities for axially symmetric black holes},
%Classical Quantum Gravity, \textbf{29} (2012), no. 7, 073001. arXiv:1111.3615

\bibitem{Dain} S. Dain, \emph{Inequality between size and angular momentum for bodies},
Phys. Rev. Lett., \textbf{112} (2014), 041101. arXiv:1305.6645

%\bibitem{DainJaramilloReiris0} S. Dain, J. Jaramillo, and M. Reiris, \textit{Black hole
%area-angular momentum inequality in non-vacuum spacetimes}, Phys. Rev. D, \textbf{84} (2011),
%121503. arXiv:1106.3743

\bibitem{DainJaramilloReiris} S. Dain, J. Jaramillo, and M. Reiris, \emph{Area-charge inequalities for
black holes}, Classical Quantum Gravity \textbf{29} (2012), no. 3, 035013. arXiv:1109.5602

\bibitem{AngladaDainOrtiz} P. Anglada, S. Dain, and O. Ortiz, \emph{The inequality between size
and charge in spherical symmetry}, 2015, in preparation.

%\bibitem{DainReiris} S Dain, and M. Reiris, \textit{Area-angular momentum inequality for axisymmetric %black holes}, Phys. Rev. Lett., \textbf{107} (2011), 051101. arXiv:1102.5215

%\bibitem{DainKhuriWeinsteinYamada} S. Dain, M. Khuri, G. Weinstein, and S. Yamada, \textit{Lower %bounds for the area of black holes in terms of mass, charge, and angular momentum},
%Phys. Rev. D, \textbf{88} (2013), 024048. arXiv:1306.4739

%\bibitem{DisconziKhuri} M. Disconzi, and M. Khuri, \textit{On the Penrose inequality for charged black %holes}, Classical Quantum Gravity,
%\textbf{29} (2012), 245019, arXiv:1207.5484.

%\bibitem{GHHP} G. Gibbons, S. Hawking, G. Horowitz, and M. Perry,
%\textit{Positive mass theorem for black holes}, Commun. Math. Phys.,
%\textbf{88} (1983), 295-308.

\bibitem{Eardley}  D. Eardley, \emph{Gravitational collapse of vacuum gravitational
field configurations}, J. Math. Phys., \textbf{36} (1995), 3004. arXiv: gr-qc/9411024

\bibitem{Eddington} A. Eddington, \emph{Internal constitution of the stars}, Cambridge
University Press, Cambridge, England, 1926.

\bibitem{Flanagan} E. Flanagan, \emph{Hoop conjecture for black-hole horizon formation},
Phys. Rev. D, \textbf{44} (1991), 2409-2420.

\bibitem{GallowayOMurchadha} G. Galloway, and N. \'{O} Murchadha, \emph{Some remarks on
the size of bodies and black holes}, Classical Quantum Gravity, \textbf{25} (2008), no. 10, 105009. arXiv:0802.3247

\bibitem{Gibbons} G. Gibbons, \emph{Some comments on gravitational entropy and the inverse
mean curvature flow}, Classical Quantum Gravity, \textbf{16} (1999), no. 6, 1677-1687. arXiv:hep-th/9809167

\bibitem{HanKhuri} Q. Han, and M. Khuri, \emph{Existence and blow up behavior for solutions
of the generalized Jang equation}, Comm. Partial Differential Equations, \textbf{38} (2013), 2199-2237. arXiv:1206.0079

\bibitem{HarrisonBally} J. Bally, and E. Harrison, \emph{The electrically polarized universe},
Astrophysical Journal, \textbf{220} (1978), 743.

%\bibitem{HawkingEllis} S. Hawking, and G. Ellis, \emph{The Large Structure of Space-Time}, %Cambridge Monographs on Mathematical Physics, Cambridge University Press, 1973.

\bibitem{Khuri0} M. Khuri, \emph{The hoop conjecture in spherically symmetric spacetimes},
Phys. Rev. D, \textbf{80} (2009), 124025. arXiv 0912.3533

\bibitem{KhuriYamadaWeinstein} M. Khuri, S. Yamada, and G. Weinstein, \emph{On the Riemannian
Penrose inequality with charge and the cosmic censorship conjecture}, Res. Inst. Math. Sci. Kokyuroku
\textbf{1862} (2012), 63–66. arXiv:1306.0206

\bibitem{Khuri} M. Khuri, \emph{Existence of black holes due to concentration of angular
momentum}, J. High Energy Phys. (2015), no. 6, 188. arXiv:1503.06166


%\bibitem{HuangSchoenWang} L.-H. Huang, R. Schoen, M.-T. Wang, \textit{Specifying angular momentum and %center of mass for vacuum initial data sets},
%Commun. Math. Phys., \textbf{306} (2011), no. 3, 785–803. arXiv:1008.4996

%\bibitem{HuiskenIlmanen} G. Huisken, and T. Ilmanen, \textit{The inverse mean
%curvature flow and the Riemannian
%Penrose inequality}, J. Differential Geom., \textbf{59} (2001), 353-437.

%\bibitem{Jang} P.-S. Jang, \emph{On the positivity of energy in General Relaitivity}, J. Math. Phys., %\textbf{19} (1978), 1152-1155.

%\bibitem{Jang1} P.-S. Jang, \textit{Note on cosmic censorship}, Phys. Rev. D,
%\textbf{20} (1979), no. 4, 834-–838.

%\bibitem{Khuri} M. Khuri, \textit{A Penrose-like inequality for general initial data sets}, Commun. %Math. Phys., $\mathbf{290}$ (2009),  779-788; \textit{Erratum}, $\mathbf{317}$ (2013), 267. %arXiv:0910.5039

%\bibitem{Khuri1} M. Khuri, \textit{A Penrose-like inequality with charge}, Gen. Relativity %Gravitation, $\mathbf{45}$ (2013), 2341–2361. arXiv:1308.3591

%\bibitem{KhuriWeinstein} M. Khuri, and G. Weinstein, \textit{Rigidity in the positive mass theorem %with charge}, J. Math. Phys., $\mathbf{54}$ (2013), 092501. arXiv:1307.5499

%\bibitem{KhuriWeinsteinYamada} M. Khuri, G. Weinstein, and S. Yamada,
%\textit{On the Riemannian Penrose inequality with
%charge and the cosmic censorhip conjecture}, RIMS K\^{o}ky\^{u}roku, Res.
%Inst. Math. Sci. (RIMS), Kyoto, to appear. arXiv:1306.0206

\bibitem{Malec1} E. Malec, \emph{Hoop conjecture and trapped surfaces in non-spherical
massive systems}, Phys. Rev. Lett., \textbf{67} (1991), 949-952.

\bibitem{Malec2} E. Malec, \emph{Isoperimetric inequalities in the physics of black holes},
Acta Phys. Polon., \textbf{B22} (1992), 829.

\bibitem{MurataReallTanahashi}  K. Murata, H. Reall, and N. Tanahashi, \emph{What happens at the horizon(s) of an extreme black hole?}, Classical Quantum Gravity, \textbf{30} (2013), no. 23, 235007. arXiv:1307.6800

\bibitem{NWMU} R. Negreiros, F. Weber, M. Malheiro, and V. Usov, \emph{Electrically
charged strange quark stars}, Phys. Rev. D, \textbf{80} (2009), 083006. arXiv:0907.5537

%\bibitem{Neslusan} L. Neslu\v{s}an, \emph{On the global electrostatic charge of stars},
%Astronomy and Astrophysics, \textbf{372} (2001), 913-915.

\bibitem{OMurchadha} N. \'{O} Murchadha, \emph{How large can a star be?}, Phys. Rev. Lett., \textbf{57} (1986), 2466-2469.

%\bibitem{Penrose} R. Penrose, \emph{Naked singularities}, Ann. New York Acad. Sci.,
%\textbf{224} (1973), 125-134.

%\bibitem{Penrose1} R. Penrose, \textit{Some unsolved problems in classical %general relativity}, Seminar on Differential Geometry,
%Ann. Math. Study, \textbf{102} (1982), 631-668.

%\bibitem{SchoenZhou} R. Schoen, and X. Zhou, \textit{Convexity of reduced energy and mass angular %momentum inequalities}, Ann. Henri Poincar\'{e}, \textbf{14} (2013), 1747-1773. arXiv:1209.0019.

\bibitem{REMLZ} S. Ray, A. Espindola, M. Malheiro, J. Lemos, and V. Zanchin, \emph{Electrically charged compact stars and formation of charged black holes}, Phys. Rev. D, \textbf{68} (2003), 084004. arXiv:astro-ph/0307262

\bibitem{Reiris} M. Reiris, \emph{On the shape of bodies in general relativistic regimes},
Gen. Relativity Gravitation, \textbf{46} (2014), 1777. arXiv:1406.6938

\bibitem{Rosseland} S. Rosseland, Mon. Not. Royal Astronomical Society, \textbf{84} (1924), 720.

\bibitem{SchoenYau1} R. Schoen, and S.-T. Yau, \emph{Proof of the positive mass theorem II},
Commun. Math. Phys., \textbf{79} (1981), 231-260.

\bibitem{SchoenYau2} R. Schoen, S.-T. Yau, \emph{The existence of a black hole due to
condensation of matter}, Commun. Math. Phys. \textbf{90}, 575 (1983).

\bibitem{Seifert} H. Seifert, \emph{Naked singularities and cosmic censhorship: Comment
on the current situation}, Gen. Relativity Gravitation, \textbf{10} (1979), 1065–1067.

%\bibitem{Szabados} L. Szabados, \emph{Quasi-local energy-momentum and angular momentum
%in GR: A review article}, Living Rev. Relativity, \textbf{7} (2004).

\bibitem{Thorne}  K. Thorne, \emph{Magic without
Magic: John Archibald Wheeler}, edited by J.
Klauder (Freeman, San Francisco, 1972), pp. 231.

\bibitem{Wald} R. Wald, \emph{General Relativity}, The University of Chicago Press,
Chicago, 1984.

\bibitem{Wald1} R. Wald, \emph{Gravitational collapse and cosmic censorhip}, in B. Iyer and
B. Bhawal, editors, \emph{Black Holes, Graviational Radiation and the Universe}, Volume 100 of Fundamental Theories of Physics, pp. 69-85. Kluwer Academic, Dorddrecht, 1999. arXiv: gr-qc/9710068

\bibitem{Witten} E. Witten, \emph{Cosmic separation of phases}, Phys. Rev. D,
\textbf{30} (1984), 272.

\bibitem{Yau} S.-T. Yau, \emph{Geometry of three manifolds and existence of black hole
due to boundary effect}, Adv. Theor. Math. Phys. \textbf{5}, 755 (2001). arXiv: math/0109053

%\bibitem{Weinstein} G. Weinstein, \textit{N-black hole stationary
%and axially symmetric solutions of the Einstein/Maxwell equations}, Comm. Partial Differential %Equations, \textbf{21} (1996), no. 9-10, 1389-–1430. arXiv:gr-qc/9412036

%\bibitem{WeinsteinYamada} G. Weinstein, and S. Yamada, \textit{On a Penrose %inequality with charge}, Comm. Math. Phys., \textbf{257}
%(2005), no. 3, 703-–723.

%\bibitem{Zhou} X. Zhou, \textit{Mass angular momentum inequality for axisymmetric vacuum data with %small trace}, preprint, 2012. arXiv:1209.1605






%\noindent[8]\hspace{.06in}  E. Malec, Phys. Rev. D \textbf{49},
%6475 (1994).\medskip

%\noindent[10] E. Malec, N. O'Murchadha, Class. Quantum Grav.
%\textbf{21}, 5777 (2004).\medskip

%\noindent[11] T. Zannias, Phys. Rev. D \textbf{45}, 2998
%(1992).\medskip

%\noindent[12] T. Zannias, Phys Rev D \textbf{47}, 1448
%(1993).\medskip

%\noindent[17] E. Malec, N. O'Murchadha, Phys. Rev. D \textbf{50},
%R6033 (1994).\medskip

%\noindent[24] S. Hayward, Phys. Rev. D \textbf{53}, 1938
%(1996).\medskip

%\noindent[25] G. Burnett, Phys. Rev. D \textbf{48}, 5688 (1993).




\end{thebibliography}
\end{document}